\documentclass[12pt,a4paper,reqno]{amsart}

\pdfoutput=1
\usepackage[margin=0.86in]{geometry}
\usepackage[dvipsnames]{xcolor}
%math symbols and thms
\usepackage{amsthm}
\usepackage{enumitem}
\usepackage{amsmath}
\usepackage{float}
 \usepackage{pdflscape}
      \usepackage{amssymb}
\usepackage{mathrsfs}
\usepackage{tabularx}
\usepackage[bookmarks=false]{hyperref}
    \hypersetup{
         colorlinks   = true,
         citecolor    = RoyalBlue
    }
    \hypersetup{linkcolor=RoyalBlue}
\usepackage{relsize}

\usepackage[utf8]{inputenc}
\usepackage{tikz-cd}
\setlength{\textfloatsep}{10pt plus 1.0pt minus 2.0pt}
\usetikzlibrary{arrows}
\tikzset{
commutative diagrams/.cd,
arrow style=tikz,
diagrams={>=stealth}}

\usepackage[english]{babel}

\usepackage{lmodern}
\makeatletter
\ifcase \@ptsize \relax% 10pt
  \newcommand{\miniscule}{\@setfontsize\miniscule{4}{5}}%  
\or% 11pt
\or% 12pt
  \newcommand{\miniscule}{\@setfontsize\miniscule{5}{6}}%  
\fi
\makeatother

\makeatletter
\ifcase \@ptsize \relax% 10pt
  \newcommand{\nano}{\@setfontsize\miniscule{3.5}{4.5}}%  
\or% 11pt
  \newcommand{\nano}{\@setfontsize\miniscule{4.5}{5.5}}%  
\or% 12pt
  \newcommand{\nano}{\@setfontsize\miniscule{4.5}{5.5}}% 
\fi
\makeatother

\newcommand{\balita}{\raisebox{1.8pt}{\text{ \nano$\bullet$\hspace{1.7pt} }}}

\usepackage{mathtools}

\newtheorem{theorem}{Theorem}[section]

\newtheorem{claim}[theorem]{Claim}

\theoremstyle{definition}

\theoremstyle{remark}
\newtheorem{remark}[theorem]{Remark}

 \usepackage{textcomp}
\numberwithin{equation}{section}

\usepackage{graphicx}
 \definecolor{VerdeFH}{HTML}{009374}
\definecolor{BR}{HTML}{002B36}

\newcommand{\eeqref}[1]{eq. \eqref{#1}}

    \definecolor{azulf}{HTML}{0092D2}
\definecolor{azulc}{HTML}{00AEEF}
    \numberwithin{equation}{section}
    % nuevos/nowe  

    %

    \renewcommand{\and}{\hphantom{a}\mbox{and}\hphantom{a}}

    \DeclareMathOperator{\Tr}{Tr}

%      \newcommand{\STr}D{\mathrm{STr}}
    
    % normalized trace 

%     \newcommand{\mathsf}[1]{\mathsf{#1}}
    \newcommand{\mtr}[1]{\mathrm{#1}}

    \newcommand{\dif}[1]{\mathrm{d}#1}

    \newcommand{\re}{\mathbb{R}}

    \newcommand{\G}{\mathcal{G}}
    \renewcommand{\H}{\mathcal{H}}
    
    \newcommand{\C}{\mathbb{C}}

    \newcommand{\ee}{\mathrm{e}}

    \newcommand{\mtc}[1]{\mathcal{#1}}
    \newcommand{\Z}{\mathbb{Z}}

    \newcommand{\hp}[1]{^{(#1)}}

    \usepackage{booktabs}% http://ctan.org/pkg/booktabs
\usepackage{colortbl}% http://ctan.org/pkg/colortbl
\usepackage{xcolor}% http://ctan.org/pkg/xcolor
\usepackage{graphicx}% http://ctan.org/pkg/graphicx

\usepackage[skip=.6ex]{subcaption}
\captionsetup[subfigure]{labelfont=rm}
%   \usepackage{ł}
%solo un numero de ec. para align
  {\csname align*\endcsname}
  {\csname endalign*\endcsname} 
% %   \newcommand{\stirlingnumbs}{\genfrac{\{}{\}}{0pt}{}}
\newcommand{\choosewithout}[2]{\begin{array}{c}
                                #1 \\[-5pt] #2
                               \end{array}
}

   \let\langleb=\langle
   \let\rangleb=\rangle
 \usepackage{mathabx}

\newcommand{\includegraphicsd}[2]{\raisebox{-.415\height}{\includegraphics[width=#1\textwidth]{#2}}}
\newcommand{\includegraphicsdt}[2]{\raisebox{-.3\height}{\includegraphics[width=#1\textwidth]{#2}}}

\newcommand{\includegraphicswextra}[3]{\raisebox{-.#3\height}{\includegraphics[width=#1\textwidth]{#2}}}
 
\newcommand{\itemb}{\item[$\balita$]} 
\hypersetup{urlcolor=RoyalPurple}
\newcommand{\Sint}{S_{\text{int}}}

\definecolor{verdechido}{HTML}{A8DCA8}
\usepackage{afterpage}
\usepackage{enumitem}

\begin{document}
 \author{Carlos I. P\'erez-S\'anchez}
 \address{Faculty of Physics, University of Warsaw 
 \phantom{..-------------------------------------------------------------------}\linebreak
 \phantom{--.-}ul. Pasteura 5, 02-093 Warsaw, Poland \phantom{..............................------------------------------------------------} \linebreak 
  \phantom{--.-}European Union\\ 
 } 

 $\hspace{-.4cm}$\email{cperez@fuw.edu.pl} 
 
\title[Comment on ``phase diagram$\ldots$matrix model with $ABAB$-interaction from FRG''$\quad$]{Comment on \\[3pt]``The phase diagram of the multi-matrix\\  model with ABAB-interaction from\\  functional renormalization''
}
% \title[On Functional Renormalization for the $ABAB$-model in CDT]{Comment on ``The phase diagram of the multi-matrix model with
% ABAB-interaction from functional renormalization''}

\begin{abstract}{
Recently, [JHEP \textbf{20} 131 (2020)] obtained (a similar, scaled version of) the ($a,b$)-phase diagram derived from the  Kazakov--Zinn-Justin solution of the Hermitian two-matrix model with interactions
    \[-\Tr\Big\{\frac{a}{4} (A^4+B^4)+\frac{b}{2} ABAB\Big\}\,,\]
starting from Functional Renormalization. 
We comment on something unexpected:  the phase diagram of [JHEP \textbf{20} 131 (2020)] is based on a $\beta_b$-function 
that does not have the one-loop structure 
of the Wetterich-Morris Equation. This raises the question of how to reproduce the phase diagram from a set of 
$\beta$-functions that is, in its totality, consistent with Functional Renormalization. A non-minimalist, yet simple truncation that could lead to the phase diagram is provided.  Additionally, we identify the ensemble for which the result of \textit{op. cit.} would be entirely correct.}
\end{abstract}

\maketitle
\section{Main claim and organization}\label{sec:claim}

% \section{Claim} 
We prove the following: For a Hermitian 
two-matrix model including the `$ABAB$-interaction' vertex \[\frac b2 \Tr \,(ABAB)=\includegraphicsd{.115}{Listones/Feo_ABAB}\]
the only one-loop, one-particle irreducible (1PI) diagram 
% \begin{itemize}
% d\itemb 
of order $b^2$ 
that has connected external leg structure---that is, such that it contributes to a connected-boundary correlation function---is
% \end{itemize}
% is \vspace{-.6cm}
  \begin{align}
                        \label{graph}\includegraphicsd{.185}{Listones/Feo_RibbonsABAB_ABAB90}   \, 
                        \end{align}%\vspace{-.13cm}%
                        
Its external leg structure is the cyclic word $ABBA$, represented by  \includegraphicsd{.115}{Listones/Feo_RibbonsABBA}. 
This implies that the $\beta_b$-function given in \cite[\text{Eq. 3.41}]{ABAB}, namely 
 
  \begin{align}\label{nichtsoganzrichtiges_Beta}
\beta_{b}\stackrel{}{=}(\eta_A+\eta_B+1 )\times b-
  \underbrace{\frac{2}{5}\big[ (5-\eta_A)+ (5-\eta_B)\big] \times\,
    b^2}_{\text{\tiny 1-loop structure $\Rightarrow$ coeff. of $b^2$
   must be   0}} ,\end{align}
    does not have the one-loop structure of the Wetterich-Morris equation 
      in Functional Renormalization\footnote{Here $\eta_A$ and $\eta_B$ are the anomalous dimensions
  for the matrices $A$ and $B$, but this is irrelevant, since the
  coefficient of $b^2$ should identically vanish.} (this is relevant, 
  since the phase diagram \cite[Fig. 1]{{ABAB}} relies only on
  $\beta_b$).  For the quadratic
term $b^2$ to be present in this equation, the graph \eqref{graph} would need
to have an external $ABAB$-structure. In other words, for \eeqref{nichtsoganzrichtiges_Beta} to hold, after `filling the
loop' in \eqref{graph} and shrinking the disk to a point, the remaining graph should read, just like the vertex, 
\[
  \includegraphicsd{.115}{Listones/Feo_ABAB}\,\]
Different infrared regulators might lead to different
coefficients (containing non-perturbative information), but the
one-loop structure in Functional Renormalization should be evident;
this means that the coefficient of $b^2$ in $\beta_b$ must vanish.
 
Next, in Section \ref{sec:details}, these statements are presented in detail in Claims \ref{thm:uno},
\ref{thm:dos} and \ref{thm:tres} (at the risk of being redundant) and proven. 
A short (albeit, rich in examples) user's guide to colored ribbon graphs  (Section \ref{sec:termo}) prepares
the core of this comment (Section \ref{sec:proofs}). 
In Section \ref{sec:proposal} we further propose a more generous truncation to obtain
the phase diagram. We compute in the large-$N$ limit without further notice.

\section{Context, definitions, examples and proofs}\label{sec:details}

The two-matrix model in question has the following partition function\footnote{We respect here the
  notation of \cite{ABAB}, the exception being renaming their
  $(\alpha,\beta)$ to $ (a,b)$ as to avoid `$\beta_\beta$'.}
\begin{subequations} \label{model}
\begin{align}\label{partfunct}
\mathcal Z= C_N \iint_{ \H_N\times \H_N} \ee^{-\frac N2 (\Tr A^2+\Tr B^2) - N \Sint(A,B)}  \dif A\, \dif  B  
\end{align}
with $C_N$ a normalization constant and $\dif A $ and $\dif B$ both the Lebesgue measure on
the space $\mathcal H_N$ of Hermitian $N\times N$  matrices, and interaction
\begin{align}\label{bare}
\Sint (A,B)= -\frac a4 \Tr (A^4 +   B^4  ) - \frac{b}{2} \Tr (ABAB) \,.
\end{align}
In 1999, the  $ABAB$\textit{-model} \eqref{model} was exactly solved by Kazakov and (P.) Zinn-Justin, who presented in 
\cite[Fig. 4]{KazakovABAB} a phase diagram of right-angled trapezoidal form 
for the couplings ($a,b$), called there ($\alpha,\beta$) as well as in \cite{ABAB}. 
% , whose diagonal based at
% the right angle (placed at at the origin) through the  critical point $(g_*,h_*)=(1/4\pi,1/4\pi) $ is the critical line. 
A consistent phase diagram with trapezoidal \textit{form} (i.e. predicting
$\approx 1/10$ for both critical exact $a_\star=b_\star=1/4 \pi$ values) is 
one of the main results
presented in \cite{ABAB}, who addressed the model \eqref{model} using
Functional Renormalization.
The phase diagram \cite[Fig. 1]{ABAB} obtained from
$(\beta_a,\beta_b)$ follows from a correct expression for $\beta_a$
but also from  
$ \beta_{b}\stackrel{}{=}\big(\eta_A+\eta_B+1
\big)b-{\frac{2}{5}\left[ (5-\eta_A)+ (5-\eta_B)\right] b^2}$, which is \cite[Eq. 3.41]{ABAB}.  We
prove here that the $\beta_b$-function is incompatible with the
well-known\footnote{The statement (revisited below) follows from the form of the
  Wetterich-Morris equation and appears e.g. in
  \cite[Sec. 3.1 \S 1]{ABAB} `As the full propagator enters,
  non-perturbative physics is captured, despite the one-loop
  structure'. It also appears in several introductory texts to Functional Renormalization, e.g. \cite{Gies}.}  one-loop structure \cite{Berges:2000ew} of Wetterich-Morris Functional
Renormalization Group Equation \cite{Wetterich,Morris}.  While the
behavior $\beta_g \sim g^2 + \ldots $ is indeed common for other
quartic operators $g\mathcal O$, this does not happen for  $b \Tr(ABAB)$. 
The rest of the section introduces the 
terminology (Section \ref{sec:termo})
and proves in detail the claims (Section \ref{sec:proofs}).
% This is not a purely matrix-model phenomenon, since 
% for the one-matrix model one finds 

\subsection{Colored ribbon graphs in multi-matrix models by example}\label{sec:termo}
Feynman graphs turn out to be useful also for `non-perturbative \cite{Berges:2000ew}
renormalization'.  
For sake of accessibility to a broader readership,  
we provide in this section an (incomplete) user's guide to graphs in multi-matrix models. 

The representation of 
the integrals of matrix models using \textit{ribbon graphs} 
(or \textit{fat graphs}), 
famous due to 't Hooft \cite{thooft}, is of paramount importance both in physics and mathematics; applications are also worth mentioning
\cite{GenChordDiags,Andersen:2013tsa}  (see in particular its relation to 
discrete surfaces or \textit{maps} studied by
Brezin-Itzykson-Parisi-Zuber \cite{Brezin:1977sv}). The theory of ribbon graphs
can be formulated in an extremely precise way \cite{Penner:1988cza},
but for the purpose of this comment, the most important feature is that their
vertices have a cyclic ordering---this is typically depicted with the aid of a disk 
with some thick strips (\textit{half-edges})  disjointly attached to it. Each 
strip represents a matrix, and these are adhered (in our convention, clockwise) to the vertex, as in the following picture:
\[
g \Tr(AACDBCDB) \leftrightarrow
\includegraphicsd{.17}{Listones/Octagono_Disco}\,\,\, 
\]
 where $A,B,C,D$ are matrices of the same ensemble\footnote{This
 vertex will not be used, this example only aims at explaining
 the concept with some more clarity.}.
Sometimes the disk is omitted, as we often do below,
and the coupling constant (here $g$), too. The rotation of the vertex is conform with the cyclicity of the trace, but one is not allowed to reflect the picture.

This way, ribbon vertices, unlike ordinary ones,
are sensitive to non-cyclic reorderings of the half-edges (see e.g. \cite[Fig. 1 and
eq. (18)]{cips}). The representation of the interaction \eqref{bare}
\end{subequations}
in terms of ribbon (or fat) vertices reads\footnote{In case of color-blindness,
we are calling \textit{green} the lighter lines and \textit{red} the darker ones. These represent $A$ and $B$, respectively. }: \begin{align}
- \Sint(A,B)=   
\includegraphicsd{.115}{Listones/Feo_AAAA}
+
\includegraphicsd{.115}{Listones/Feo_BBBB}
+ 
\includegraphicsd{.115}{Listones/Feo_ABAB} \,
\end{align}
Since the above graphical representation will be used only as a
cross-check, we ignore the symmetry factors (strictly, we should put a
root on one edge of each interaction vertex) and also absorb the couplings in the vertices.  
The cyclic ordering means, in particular, that 
\begin{align} \label{obvious}%\raisetag{10ex}
  \includegraphicsd{.1015}{Listones/Feo_ABAB} \neq
  \includegraphicsd{.1015}{Listones/Feo_RibbonsABBA}, \mbox{ which
    faithfully represents the obvious: $\Tr (ABAB) \equiv\!\!\!\!\! \slash\,\, \Tr(ABBA)$}.
\end{align} 
 
\textit{Edges} are also fat (double lines)
and consist of pairings of half-edges (which for Hermitian
 matrix models cannot have net `twists' \raisebox{-.145\height}{\includegraphics[width=22pt]{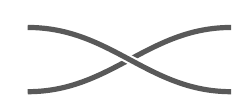}}).
In order to emphasize that 
these arise from propagators, we shade the edges. While
for a one-matrix model all edges are equal---for instance,
the next graph arising in a quartic one-matrix  model,
\begin{align}\label{grafo_cuartico}
\includegraphicsd{.19}{Listones/Feynman_Ribbons_quartic}
\end{align}
---the new feature in multi-matrix models the coloring
of the edges. In the $ABAB$-model 
(which contains the $A^4$ and $B^4$ interactions)
the graph \eqref{grafo_cuartico} is possible 
in green (implying only the $A$ matrix) or in red (only $B$), 
but a coloring of the graph \eqref{grafo_cuartico} in such
a way that it contains (at least one occurrence of) the $ABAB$-\textit{vertex} is not possible. An example of
a ribbon graph with two such vertices is  
\begin{align}\label{topo}
\includegraphicsd{.38}{Listones/Feynman_ABABsquared_topo}
\end{align}
in which also the vertices $v_i$, edges or propagators\footnote{The propagators are shaded and they respect the color
  of the associated matrix.} $e_i$, 
and faces $f_i$ are depicted (a \textit{face} is a
boundary component of the ribbon graph; here we have three bounded faces, $f_1,f_2,f_3$, 
and one unbounded,  $f_4$). The colored ribbon
graphs of multi-matrix models, just as their uncolored version, have a topology
determined by the Euler number $\chi$ (where $\chi=\#\,$vertices $ -\#\,$edges$\, +\#\,$faces, in this case clearly $\chi=2$,
corresponding to a spherical topology; this turns out to be important, since the 
scaling of the graph amplitudes with the matrix size $N$ is $\sim N^\chi$).
If we would have an octic interaction vertex
$g \Tr(ABABABAB)$, one of the (several) possible diagrams is 
\begin{align} \label{genustwo}
\includegraphicsd{.3384}{Listones/large-g_Ribbons_umgekehrt}
\end{align}
One can verify, by starting at any point of the 
boundary of the diagram---say, at the point $P$ in the picture---and by following the arrows, 
that when one comes back to $P$, one has already visited the whole boundary once.
Therefore, this fat graph has a single face.
In that picture, the arrow with a 45$^\circ$ angle $|\hspace{-6.1pt}\nearrow$ emphasizes 
the criterion to travel along the circle that defines the boundary of the face: when one arrives at the disk (marked with the coupling $g$) 
one picks the closest single-line, regardless of the color, for the cyclic order at the  
vertex determines precisely, which goes next.
The counting of vertices (one), faces (one) and edges (four) exhibits its non-planarity (i.e. it cannot be draw of a sphere;
the best one can do is draw it without intersections 
on a surface of genus 2) but below we will find only planar diagrams.

So far, all examples we presented are \textit{vacuum graphs}. We now consider
also graphs having half-edges that are not contracted, as in
\begin{align}\label{brokenfaces}
\includegraphicsd{.34}{Listones/Feynman_ABAB_nonvacuum}
\end{align}
A face of a ribbon graph  is said to be
\textit{unbroken} if no (uncontracted) half-edges are incident to it; the face is otherwise said
to be \textit{broken}, precisely by the incident half-edges.  
(Thus, vacuum graphs can only have unbroken faces.)
In the graph \eqref{brokenfaces},
the faces $f_1$ and $f_2$ are unbroken, while 
the unbounded face $f_3$ is broken by the two red half-edges pointing northeast and northwest.
\par 
Given a (colored) ribbon graph one can forget the cyclic ordering 
of its vertices (and if present, the coloring of its edges) and thus obtain a graph in the ordinary sense.
For instance,
\begin{align}
\raisebox{-.370\height}{
 \includegraphics[width=69pt]{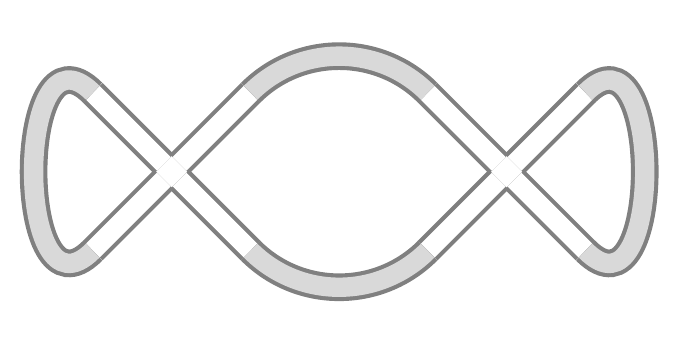}}\,\,
\mapsto \,\,
\raisebox{-.36\height}{\includegraphics[width=39pt]{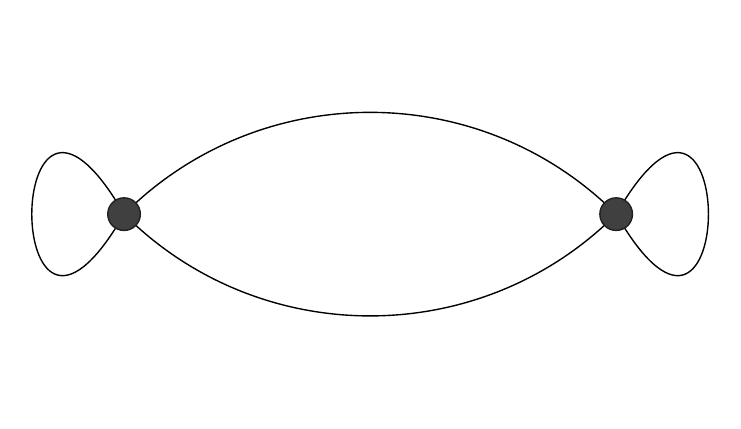}}
\end{align}
and 
\begin{align}
\raisebox{-.30\height}{
 \includegraphics[width=40pt]{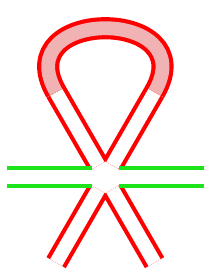}}\,\,
\mapsto \,\,
\raisebox{-.36\height}{\includegraphics[width=39pt]{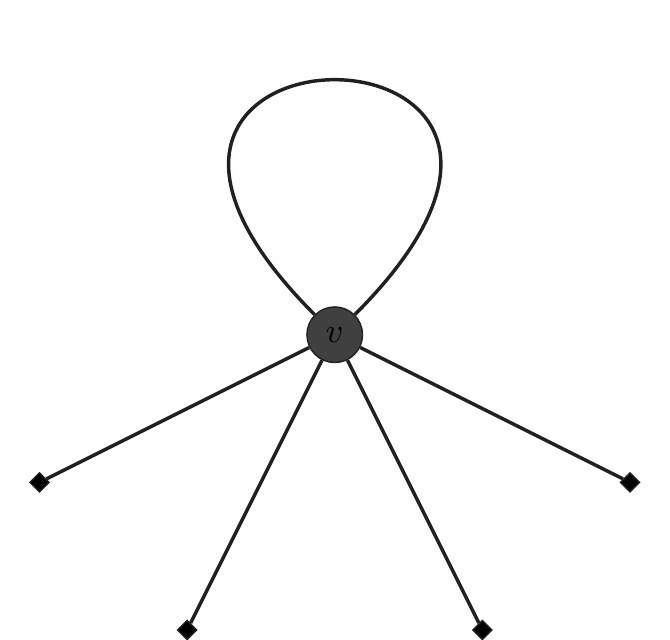}}
\label{c}
\end{align}
Notice that to the uncontracted half-edges of the ribbon graphs
one associates leaf (degree one vertex). 
A connected (colored) ribbon graph is said to have a \textit{one-loop}
structure if its underlying ordinary graph has a one-loop structure, that is,
if the latter has a first Betti number\footnote{The first Betti number of a connected ordinary graph $G$ is $b_1(G)= \# \text{edges of }G -\# \text{vertices of }G + 1$, so `one-loop' means that the number of edges equals the number of vertices. Notice
that the `external edges' are attached to a vertex. This
definition of $b_1$ also matches other conventions,
where external edges have no vertices attached, but then it is given by
 $b_1(G)=\# \text{internal edges of }G -\# \text{vertices of }G + 1$ 
} equal to 1; alternatively,
if it has one independent cycle.
Therefore \eqref{c} above has a one-loop structure, but neither of the following has it:
\allowdisplaybreaks[3]%
\begin{subequations}%
 \begin{align}\label{loopsnoloopsA}
 \raisebox{-.36\height}
{\includegraphics[width=109pt]{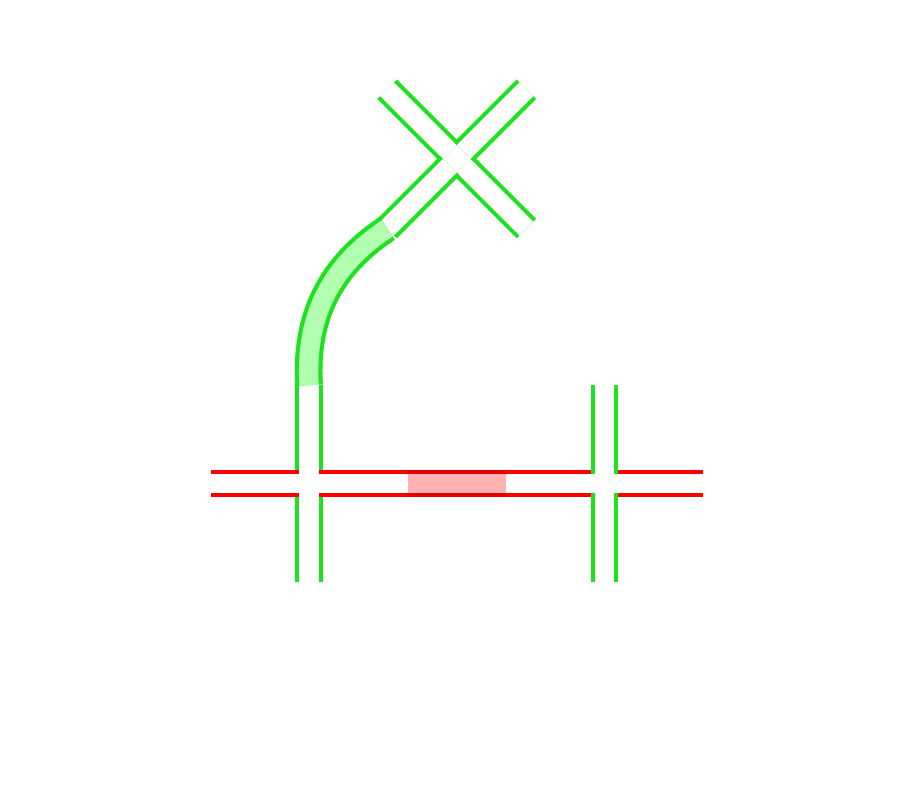}} \hspace{-20pt}&\mapsto 
\quad\raisebox{-.13\height}
{\includegraphics[width=49pt]{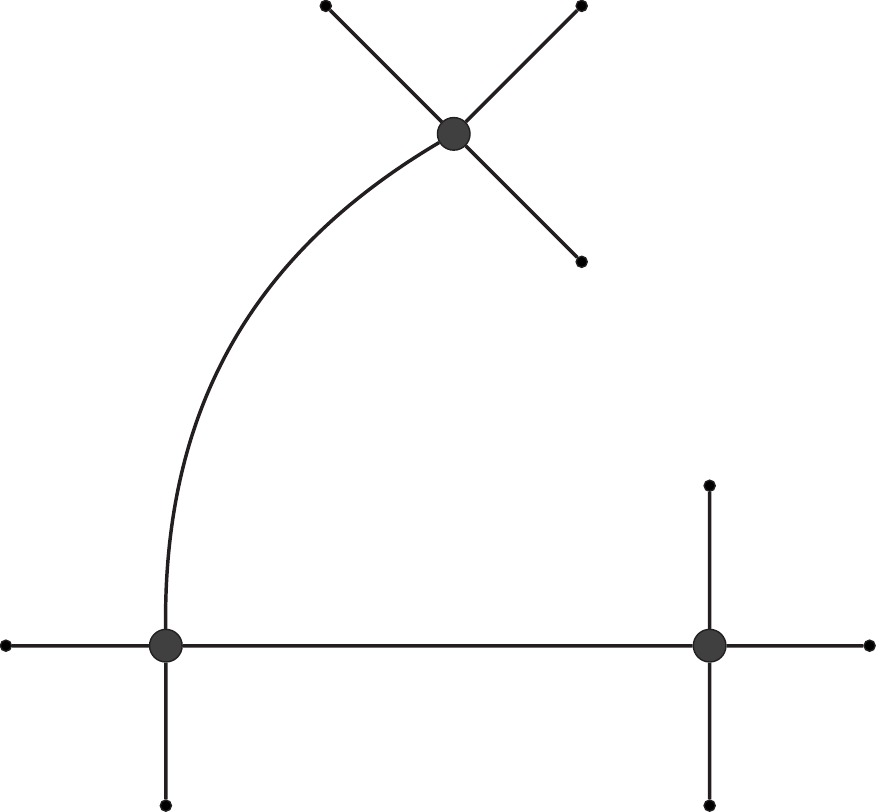}}
% \qquad \text{} \quad 
\\\label{loopsnoloopsB}
\raisebox{-.36\height}
{\includegraphics[width=109pt]{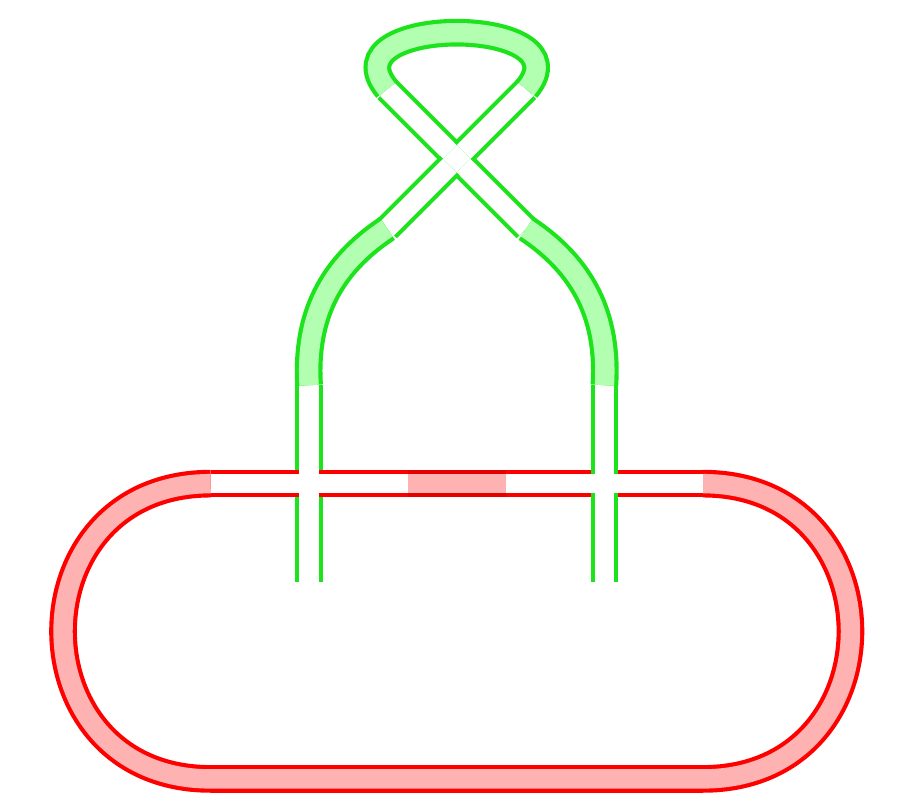}}
&\mapsto \quad
\raisebox{-.13\height}
{\includegraphics[width=40pt]{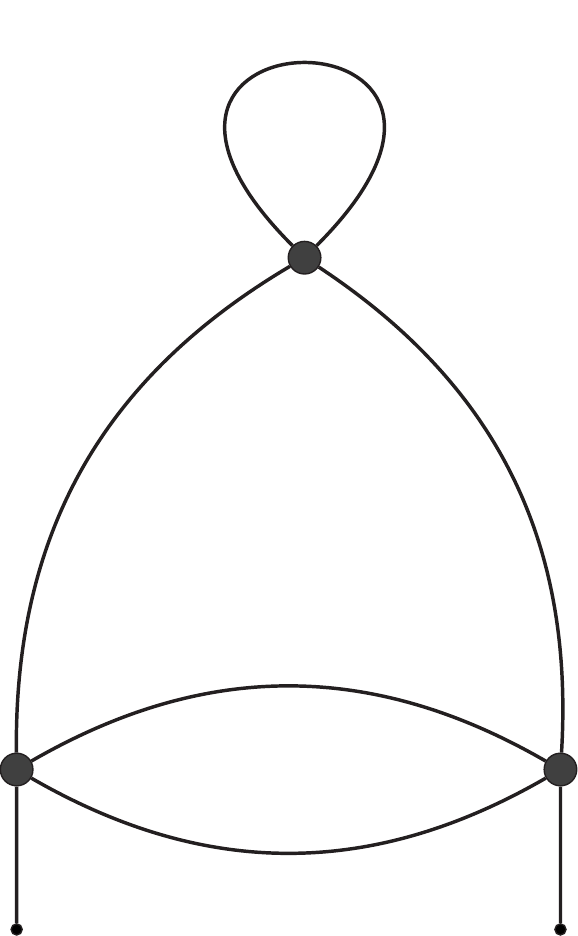}}
\end{align}
\label{loopsnoloops}%
\end{subequations}%
for the upper ribbon graph has no loops and the other has three.
\par 
A connected (colored) ribbon graph is \textit{one-particle irreducible} (1PI) if its underlying ordinary graph is 1PI. 
An ordinary 
graph is 1PI if it is neither a \textit{tree}---a graph for which 
there is a unique path between any two given vertices---nor it can be disconnected by removing exactly  one edge.
The (ordinary) graph in \eqref{loopsnoloopsA}
is a tree \textit{and} it can be disconnected by cutting 
either the straight or the curved edge. Therefore
the fat graph in \eqref{loopsnoloopsA} not 1PI.
On the other hand, the ordinary graph in 
\eqref{loopsnoloopsB} is neither a tree (for there exist
vertices connected by more than one path) and if one 
removes any edge, it remains connected. Therefore
its primitive ribbon graph in the left of \eqref{loopsnoloopsB} is 1PI. 
The next example is not at tree,
\begin{align}
 \raisebox{-.36\height}
{\includegraphics[width=109pt]{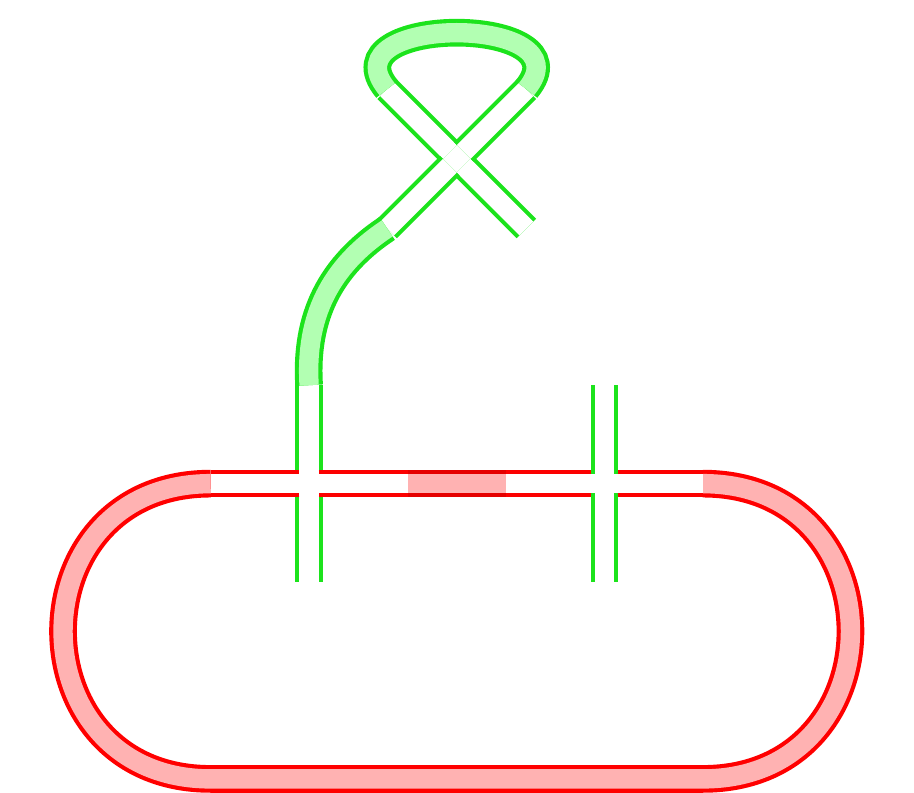}}
\end{align}
 but
removing one 
propagator (the green one in position \raisebox{-8pt}{\tikz{\node[rotate=-45,green!80!black]{$\hspace{-8pt} (\!($};}}\hspace{-4pt}) disconnects it, so it is not 1PI.

We now introduce the last concept. The 
\textit{external leg structure} of a ribbon graph with one 
broken face is obtained by reading off clockwise 
the cyclic word formed by the matrices (associated with 
the half-edges) that break that face, going around the boundary-loop exactly once. 
% If a ribbon graph has more than one broken faces,
% then its external leg structure is the disjoint union 
% of the words for each broken face. 
This process is known in 
the matrix field theory literature \cite{GW12} (illustrated in \cite[Sec. 5]{fullward})
and a generalization to multi-matrix models requires to additionally
list the half-edges respecting the coloring. We illustrate this concept,
reusing graphs previously drawn:
\begin{enumerate}
% \itemb The graphs in \eqref{obvious} and \eqref{genustwo} have an empty external leg structure (they are vacuum graphs)
 \itemb Concerning the graph \eqref{brokenfaces}: it has an 
external leg structure $BB$, since going along the $f_3$ face,
one meets twice a red line.
\itemb The graph \eqref{c} has external leg structure $ABBA$
\itemb The ribbon graph in \eqref{loopsnoloopsA}
has external leg structure $AAABAABA$ 
\itemb That in \eqref{loopsnoloopsB} has external leg structure $AA$
\end{enumerate}
An external leg structure determines in a natural way a new 
interaction vertex of matrix models 
just by `taking its trace'. The 
cyclicity of the external leg structure yields well-definedness. In the 
list of external leg structures of 1 through 4, these correspond,  
to 1. $\Tr(B^2)$, 
2.  $\Tr(ABBA)$, and  3. $\Tr (AAABAABA)$ and 4. $\Tr(A^2)$.
The next example probably explains why we restricted ourselves to graphs with a single broken face,
\begin{align} \label{broken}
 \includegraphicsd{.13}{Listones/Feo_RibbonsABAB_ABABbrokenloop}
\end{align}
 The face inside the red loop yields $A^2$; the same from the outer face. 
 Thus, the external leg structure of \eqref{broken} is 
the disjoint union of $A^2$ with $A^2$.
Graphs with more than one broken face will not appear below, since these lead to multi-traces,
in the case of  \eqref{broken} to $\Tr(A^2)\times \Tr (A^2)$, and these multi-trace interactions are not
consider in the article we are commenting (but are treated in \cite{FRGEmultimatrix}). But the idea of a more general setting is depicted in Figure \ref{fig:ExternalLegStr}. \par

We are done with the terminology. In the following section we prove the claims.

\subsection{Proofs}\label{sec:proofs}
In the Functional Renormalization Group parlance, one says that the RG-flow generates the interaction vertices
that arise from the external leg structure,
\textit{whenever these come from a one-loop graph}. The one-loop condition  
is a consequence of the `supertrace' present in Wetterich Equation\footnote{Without giving 
a full description (for that we refer to \cite{FRGEmultimatrix}) we recall
that $R_N$ is the infrared regulator;
that $\Gamma\hp2_N$ is the Hessian of the interpolating 
effective action $\Gamma_N$; and that $t=\log N$ is the 
RG-time.} $\partial_t \Gamma_N= \frac12 \mtr{STr}\, \big\{ \partial_t R_N /[R_N+ \Gamma\hp2_N] \big\} $. 
This makes
some of the graphs listed above uninteresting
from the viewpoint of renormalization, as they do not have a
one-loop structure. Those which do, 
have also some drawbacks: \eqref{c} has a one-loop
structure, and is a graph generated by the RG-flow in the $ABAB$-model,
but that sextic vertex is set to zero in the truncation that \cite{ABAB} considers.
On the other hand, the graph \eqref{broken} is a Feynman graph containing
only vertices of the $ABAB$-model,
but it leads to a double-trace, and so on. \par 
This hopefully slowly starts to convince the reader that the 1PI and the one-loop conditions 
heavily restrict the graphs that play a role in the FRG; this is the flavor of the proofs below, where in fact, we  obtain certain kind of uniqueness.
Consider the following graph:
\begin{align} \label{unbroken}
\mtc G:=\includegraphicsd{.185}{Listones/Feo_RibbonsABAB_ABAB90}\,\,
\end{align}
It has a broken face and an unbroken one,
and when one glues a disk along the boundary 
this happens:

 \begin{figure} 
   \centering
   \includegraphics[width=.40\textwidth]{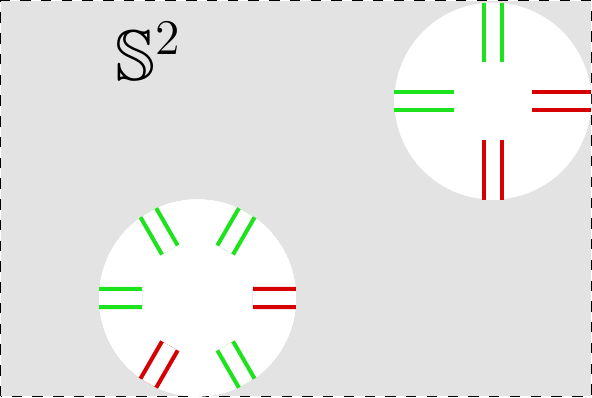}
   \caption{The gray part is an abstract representation of a multi-matrix
model planar graph (after `vulcanizing' the unbroken faces and forgetting everything but broken faces) with two faces broken by uncontracted
half-edges. These boundaries determine the external leg structure
decorating the cylinder (the two broken faces are interpreted as excised discs from $\mathbb S^2$ here). The 4-leg boundary in
the upper right corner yields the cyclic word $ABBA $ (or $AABB$,
or$\ldots$); the 6-legged one is $AABABA$ (or $AAABAB,\ldots$). The external leg structure is
the disjoint union of these words.\label{fig:ExternalLegStr}}
\end{figure} 
\begin{claim}\label{thm:uno}
  The external leg structure of the graph $\mtc G$ given by
  \eqref{unbroken} is $ABBA$, i.e.
\[\includegraphicsd{.125}{Listones/Feo_RibbonsABBA}\]
\end{claim}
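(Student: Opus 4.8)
The plan is to compute the external leg structure straight from its definition: locate the unique broken face of $\mtc G$, traverse its boundary clockwise exactly once, and read off the colors of the uncontracted half-edges met along the way. Since $\mtc G$ is built from two $ABAB$-vertices $v_1,v_2$, each carrying four half-edges in the cyclic order $A,B,A,B$, and since its two color-respecting propagators pair four of the eight half-edges, precisely four half-edges remain uncontracted; these are what decorate the broken face. First I would fix an explicit labelling of the eight half-edges and record the pairing realized by the two internal edges, so that the full combinatorial data of $\mtc G$ is pinned down before any tracing begins.

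Next I would identify the two faces. By the structure described for \eqref{unbroken}, $\mtc G$ has one unbroken face---bounded by the two internal propagators, this is the ``loop'' of the one-loop diagram---and one broken face carrying all four external half-edges. As a consistency check I would confirm the count against the Euler characteristic ($\chi=V-E+F=2-2+2=2$, a planar graph) and then single out the broken face as the boundary to be traversed.

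Finally I would trace that boundary: starting at one external half-edge, follow a side of each ribbon, and at each vertex use the clockwise order $A,B,A,B$ to pass to the adjacent strip, recording the color of every uncontracted half-edge as it is encountered, until the walk closes up. The expected output is the sequence $A,B,B,A$, i.e. the cyclic word $ABBA$. The delicate point---and the reason the claim is not vacuous---is the orientation bookkeeping at the internal edges: the boundary walk runs along the two propagators of the loop in opposite senses relative to $v_1$ and $v_2$, so the two local $A,B$ readings are not simply concatenated into $A,B,A,B$ but combine into $A,B,B,A$. Keeping the sense of traversal consistent across each gluing is where I expect the real work to lie, since it is exactly this effect that distinguishes $ABBA$ from the vertex word $ABAB$ (cf. \eqref{obvious}) and thereby underpins the entire comment.
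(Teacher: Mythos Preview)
Your proposal is correct and follows essentially the same approach as the paper: identify the single broken face of $\mathcal{G}$ (the unbounded one in the drawing) and read off the cyclic word along its boundary. The paper's own proof is considerably terser---it simply appeals to the picture and notes that any starting point yields a cyclic permutation of $ABBA$---so your explicit labelling, Euler-characteristic check, and orientation bookkeeping are more than strictly required, but the underlying argument is the same.
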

\begin{proof}
In $\G$ one has a single broken face (it can be recognized  
in the drawing   \eqref{unbroken} as the face that is unbounded). 
Starting at any point one reads off
at the boundary of such face 
the word $ABBA$. One could also read off $BBAA$, $AABB$ or $BAAB$, depending on where one chooses 
to start. Yet, nothing changes, since the external leg structure is cyclic
by definition.
\end{proof}

\begin{claim}\label{thm:dos} The unique $(\hspace{-.61pt}$connected$\hspace{.81pt})$ one-loop 1PI graph of order $b^2$ having a
  connected external leg structure  is the graph $\mtc G$ defined by \eqref{unbroken}.
\end{claim}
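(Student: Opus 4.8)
The plan is to convert the hypotheses into a rigid combinatorial skeleton and then read off faces; the key simplification, which I would isolate early, is that Hermiticity makes every admissible graph planar, so that only the \emph{number of broken faces} has to be controlled. First I would pin down the skeleton. Order $b^2$ forces exactly two $ABAB$-vertices and no $A^4$- or $B^4$-vertex (the latter carry the coupling $a$), hence eight half-edges, four green and four red. For a connected graph on two vertices the one-loop condition $b_1=\#\{\text{internal edges}\}-\#\{\text{vertices}\}+1=1$ gives exactly two internal edges and four uncontracted half-edges. The 1PI condition then excludes self-loops: a tadpole at one vertex would turn the single remaining internal edge into a bridge, or would disconnect the graph. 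Hence both internal edges join the two distinct vertices, and every admissible graph has the same skeleton --- a double edge between the two $ABAB$-vertices.

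Next I would invoke orientability. Since Hermitian propagators carry no net twist, the fattened graph is an orientable surface, and for this skeleton its Euler characteristic is $\#\{\text{vertices}\}-\#\{\text{internal edges}\}=0$ regardless of how the bands are attached; being orientable with nonempty boundary it must have genus $0$ and exactly two faces. Thus \emph{every} admissible graph is automatically planar with $F=2$, and a \emph{connected} external leg structure means precisely that exactly one of the two faces is broken while the other is unbroken. The statement is therefore reduced to deciding, for each color-respecting gluing, how many of the two faces carry uncontracted half-edges.

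Then I would enumerate the gluings, which are fixed once the contracted half-edges are recorded (there is no further twisting freedom), using the cyclic datum $ABAB$ at each vertex together with the symmetries (vertex swap, $A\leftrightarrow B$, rotation at each vertex). A monochromatic double edge leaves all four external legs of a single color; tracing the boundary shows both faces are broken, giving the multi-trace $\Tr(X^2)\,\Tr(X^2)$ --- this is the configuration \eqref{broken}, which is discarded. A mixed double edge (one $A$- and one $B$-propagator) splits into two types: one leaves both faces broken, a multi-trace $\Tr(AB)\,\Tr(AB)$, again discarded; the other leaves exactly one broken face, whose boundary word, read as in Claim \ref{thm:uno}, is $ABBA$. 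The latter is the single survivor and equals $\mtc{G}$ of \eqref{unbroken}, which proves uniqueness; note in particular that no admissible graph yields the cyclic word $ABAB$ with a single broken face, which is the point relevant to $\beta_b$.

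I expect the boundary bookkeeping of the last paragraph to be the main obstacle: for each gluing one must fatten the graph and trace its two faces faithfully --- respecting the clockwise cyclic order at both vertices and the untwisted propagators --- in order to decide which faces are broken and to extract the cyclic word. This is exactly what separates $\mtc{G}$ (one broken face reading $ABBA$, one unbroken face) from the multi-trace competitors. Everything else --- the Betti-number count, the no-bridge argument, the Euler-characteristic computation forcing planarity, and the reduction by symmetry --- is routine.
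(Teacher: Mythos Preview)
Your argument is correct and shares the paper's skeleton step (two $ABAB$-vertices, exactly two propagators joining them, no tadpoles by 1PI). Where you depart from the paper is in the treatment of the face structure: the paper simply inspects each colour-respecting gluing and reads off the external word case by case, whereas you first prove abstractly---via the Euler characteristic of the ribbon surface with boundary, $\chi=V-E=0=2-2g-F$, forcing $g=0$ and $F=2$---that every admissible graph is automatically planar with exactly two faces, and then recast ``connected external leg structure'' as ``exactly one of the two faces is unbroken''. This buys you a clean organizing principle and lets you exhibit explicitly the mixed-colour gluing with two broken faces yielding $\Tr(AB)\times\Tr(AB)$, a case the paper subsumes in the single clause ``the only such graph having also a connected external leg structure is $\mathcal G$''. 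Both routes are short; yours is slightly more systematic, while the paper's is a direct inspection.
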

In this statement the connectedness of the external leg structure 
means that $\G$ has a single broken face.
\begin{proof}
  By assumption, the graph has two interaction $ABAB$-vertices; let us
  name $V_1$ and $V_2$ the two copies.  The 1PI-assumption constrains
  the $p$ propagators implied in the graph to $p>1$ (indeed,  
  since the graph should be connected, and has two interaction
  vertices, a propagator should connect these. 
  Should the graph have a single propagator, then removing 
  it would yield a disconnected graph contradicting the 1PI assumption). On the other hand, the
  one-loop condition implies $p<3$ (if $p\geq 3$ then more loops are formed).  Again, since the graph in question is
  1PI, the $p=2$ propagators implied in the graph connect $V_1$ with
  $V_2$.  If they connect two equal colors, we get a disconnected
  external leg structure, i.e. either the graph \eqref{broken} or its
  $A\leftrightarrow B$ (i.e. green $\leftrightarrow$ red) version. Therefore, by assumption, the two propagators
  connecting $V_1$ with $V_2$ must have different color. The only
  such graph having also a connected external leg structure is $\G$ defined
by \eqref{unbroken}.
\end{proof}

It is a Quantum Field Theory folklore result (see e.g. \cite{Borcherds:2002cy} and \cite[Sec. 3.3]{ConnesMarcolli}) that the effective action is the generating functional of 1PI graphs. 
Since neither the cyclicity of the vertices\footnote{The cyclicity of the 
vertices is of course a key feature for the power counting $N^\chi$,
but this and one-particle irreducibility are independent properties.} nor the coloring of thick 
edges has influence on the 1PI property,
the effective action in multi-matrix models 
generates 1PI colored ribbon graphs; thus, that  
folklore result remains true in this setting. 
Further, the Functional Renormalization Equation
governs the interpolating effective action,
hence the 1PI condition is imposed from the outset\footnote{The exception is
the vertices appearing as linear terms in the $\beta$-functions.}.
We have, in view of this:

\begin{claim}\label{thm:tres}
  For the particular operator  $-\frac{b}{2} \Tr(ABAB)$, a  quadratic term
  $b^2$ in $\beta_{b} $ is not possible in Functional Renormalization $($in other words
 the coefficient of $b^2$ in $\beta_b$ vanishes,  or in notation, $[b^2]\beta_b=0$$)$.
\end{claim}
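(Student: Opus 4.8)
The plan is to obtain Claim~\ref{thm:tres} as a direct corollary of Claims~\ref{thm:uno} and~\ref{thm:dos} together with the one-loop dictionary of the Wetterich-Morris equation. First I would recall that the right-hand side $\tfrac12\STr\{\partial_t R_N/[R_N+\Gamma^{(2)}_N]\}$ is a single supertrace, hence a one-loop expression in the dressed propagator and vertices. Expanding it in powers of the coupling $b$, a contribution of order $b^2$ to the flow of \emph{any} operator must originate from a one-loop diagram assembled from exactly two $ABAB$-vertices. Since the interpolating effective action $\Gamma_N$ generates 1PI graphs, such a diagram is automatically 1PI.

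Next I would invoke the rule that the interaction vertex fed by a given graph is the one read off from its external leg structure. Because $\beta_b$ governs the single-trace operator $-\tfrac b2\Tr(ABAB)$, only graphs whose external leg structure is the connected cyclic word $ABAB$ can contribute to $[b^2]\beta_b$; a disconnected external leg structure produces a multi-trace vertex (as in \eqref{broken}, giving $\Tr(A^2)\Tr(A^2)$), which lies outside the single-trace channel and so cannot enter $\beta_b$. Thus the search reduces to one-loop 1PI graphs of order $b^2$ with a single broken face whose boundary word is $ABAB$.

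At this point Claim~\ref{thm:dos} does the decisive work: the unique connected one-loop 1PI graph of order $b^2$ with a connected external leg structure is $\G$ of \eqref{unbroken}. By Claim~\ref{thm:uno} its external leg structure is $ABBA$, and by \eqref{obvious} the cyclic words $ABBA$ and $ABAB$ are distinct, i.e. $\Tr(ABBA)\neq\Tr(ABAB)$. Consequently no order-$b^2$, one-loop, 1PI graph carries the external leg structure $ABAB$, so $\G$ feeds the $ABBA$-channel rather than $\beta_b$, and the coefficient $[b^2]\beta_b$ must vanish. This is precisely the statement that the $b^2$ term underlined in \eqref{nichtsoganzrichtiges_Beta} cannot arise.

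Once Claims~\ref{thm:uno} and~\ref{thm:dos} are in hand the argument is essentially a bookkeeping exercise, so I do not expect a genuine computational obstacle. The one point deserving care — and the part I would state most explicitly — is the translation of ``one-loop structure of Wetterich-Morris'' into ``order-$b^2$ contributions come from one-loop 1PI two-vertex graphs,'' together with the exclusion of every alternative source of a $b^2$ term: more than two $ABAB$-vertices would raise the order beyond $b^2$, and disconnected external leg structures feed only multi-trace operators absent from the truncation. With these ruled out, the uniqueness in Claim~\ref{thm:dos} leaves $ABBA$ as the sole external leg structure available at order $b^2$, forcing $[b^2]\beta_b=0$.
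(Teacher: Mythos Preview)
Your proposal is correct and follows essentially the same route as the paper's own proof: invoke the one-loop structure of the Wetterich--Morris equation to reduce the $[b^2]\beta_b$ coefficient to one-loop 1PI graphs of order $b^2$ with external leg structure $ABAB$, then apply Claim~\ref{thm:dos} for uniqueness and Claim~\ref{thm:uno} to see that the sole candidate $\G$ has boundary word $ABBA\neq ABAB$. Your write-up is somewhat more explicit than the paper's (you spell out why disconnected external leg structures are excluded and why exactly two vertices are forced), but the logical skeleton is identical.
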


\begin{proof}  Wetterich-Morris Equation imposes the one-loop
  structure on any non-linear term in the coupling constants (i.e.$\,\,$on any non-vertex) appearing in each $\beta$-function.  For the
  $\beta_b$-function, in particular, a second condition is that the
   external leg structure must be $ABAB$.  These two conditions
  are mutually exclusive. Indeed, by     Claim \ref{thm:dos}, the only such
  $\mathcal O(b^2)$ graph is $\G$, which, by Claim \ref{thm:uno}, has an
  external leg structure $ABBA$.\end{proof}
\noindent

\begin{remark} Some closely related, but not essential points: 

\begin{itemize} 
  \itemb The correlation functions of matrix \cite{GW12} and tensor
  field theories are indexed by \textit{boundary graphs}
  \cite{Bonzom:2014oua}.  The terminology makes sense graph theoretically but
  also geometrically in both the matrix and tensor field
  \cite{fullward} contexts. In the case of matrix models, boundary
  graphs coincide with what we call here \textit{external leg
    structure}. In matrix models,
  there are as many $\beta$-functions as correlation functions,
  hence the importance of the external leg
    structure. The map defined by `taking the boundary'
    seems also to play a role in other renormalization theories, like Connes-Kreimer Hopf algebra approach \cite{ConnesKreimer}. 
   In that theory for matrices (related construction appears in \cite{TanasaVignes}) taking the boundary seems to be the residue map in 
   terms of which one can define the coproduct of the Hopf algebra 
   (also true \cite{Raasakka:2013kaa} for the  Ben Geloun-Rivasseau tensor field theory \cite{4renorm}).
 
  \itemb Other graphs might appear for real symmetric matrices, but
  the ribbons corresponding to Hermitian matrices remain untwisted,
  which played a role in the uniqueness of the graph above. It is not
  exaggerated to stress the reason for this rigidity, which is
  explained by the difference in the propagators of matrix models in
  ensembles $\{ M\in M_N(\mathbb K) \mid M^\dagger = M\}$ for
  different fields $\mathbb K$, i.e.
\begin{align}
  \langleb M_{ij} M_{kl}\rangleb_{\mathbb K}= \frac{1}{N}\begin{cases}
    \displaystyle\frac{1}{2}  \Big( \choosewithout{i}{j}\! \! \includegraphicsdt{.1}{Listones/match} \choosewithout{l}{k}
    +   \choosewithout{i}{j} \!\! \includegraphicsdt{.1}{Listones/mismatch}  \choosewithout{l}{k} \Big) & \mathbb K= \re \,\,(\text{symmetric real})
    \\[2ex] \phantom{\displaystyle\frac{1}{2}  \bigg(}
    \choosewithout{i}{j} \!\!\includegraphicsdt{.1}{Listones/match} \choosewithout{l}{k}& \mathbb K= \C \,\,(\text{Hermitian})
\end{cases}
\label{propagators}
\end{align}

\end{itemize}

\end{remark}
 
\section{A proposal to obtain the Kazakov--Zinn-Justin phase diagram}\label{sec:proposal}
We recall that in \cite{ABAB} the truncation is minimal. Thus, the running operators are also those in the bare
action in eqs. \eqref{partfunct}-\eqref{bare}. \par
In that truncation \cite{ABAB}, if one computes correctly, $\beta_b \sim b $ (the value of the vertex); this is not wrong, but only says that such truncation threw away useful information.
In order not to `waste' the $b^2$ term, we propose to add
the operator that captures it, so the new effective action reads:
\begin{align}\label{newtruncation}
 \Gamma_N[A,B]=\frac{Z}{2}\Tr (A^2+B^2) -\frac{\bar{a}}{4} \Tr (A^4+B^4) -\frac{\bar{b}}{2}\Tr (ABAB)-\bar c \Tr(ABBA)\,,
 \end{align}
  where bar on quantities means \textit{unrenormalized} and
$Z$ is the (now common) wave function renormalization. 
 The $ABBA$-operator also respects the original $\Z_2$-symmetries:
 \begin{align}\label{symmetries}
 (A,B)\mapsto (B,A), \quad (A,B)\mapsto (-A,B)\,, \quad (A,B)\mapsto (A,-B)\,. 
 \end{align}
 Now, $b^2$ does appear, but in the $\beta_c$-function. In fact $\beta_c\sim b^2 + c^2 + 2ac$, ignoring
 the contribution ($\propto \,c$) of the vertex.
This relation was obtained with the (`coordinate-free') method
presented in \cite[around Corollary 4.2]{FRGEmultimatrix}.
Removing the symmetry condition
($g_{AAAA}=a=g_{BBBB}$) we initially imposed on the couplings for $A^4$ and $ B^4$, and writing in full $g_{w}$ for the coupling of
$\Tr\{w(A,B)\}$, being $w$ a cyclic word in $A,B$, one has\footnote{Or
  $\beta{(g_{ABBA})} - g_{ABBA} (\eta_A+\eta_B +1) \sim {g_{AAAA}
    \times g_{ABBA}} + g_{BBBB} \times g_{ABBA} + (g_{ABAB})^2 +
  {(g_{ABBA})^2}$ in case that colors distract the reader.}
\begin{subequations} 
\begin{align} 
  \qquad \qquad \beta{(g_{\color{green!67!black}A\color{red!84!black}BB\color{green!67!black}A})} - g_{\color{green!67!black}A\color{red!84!black}BB\color{green!67!black}A} (2\eta +1) &  \sim  \nonumber 
                                                                                                                                                                                                  {g_{\color{green!67!black} AAAA} \times  g_{\color{green!67!black}A\color{red!84!black}BB\color{green!67!black}A}}  +  g_{\color{red!80!black}BBBB} \times g_{\color{green!67!black}A\color{red!84!black}BB\color{green!67!black}A}   \\ &\!\qquad+  (g_{\color{green!67!black}A\color{red!84!black}B\color{green!67!black}A\color{red!84!black}B})^2 +     {(g_{\color{green!67!black}A\color{red!84!black}BB\color{green!67!black}A})^2}\,,  \label{Moja}  \end{align} 
                                                                                                                                                                                                obtained without the use of graphs. The missing coefficients (hidden in $\sim$ and implying the anomalous dimension $\eta= -N\partial_N Z$) are regulator-dependent and contain non-perturbative information, but the essential point is that now each $\beta$-function has a transparent one-loop structure; to wit, the RHS of \eqref{Moja} corresponds (respecting the order in that sum\footnote{Of course, the rotation of these graphs plays no role, since only the cyclic order matters. We display some vertically, and others horizontally for sake of the reader's comfort: this way, she or he can start reading the word from the upper left corner anticlockwise and this order will coincide with the order of the word in the corresponding coupling constant.}) with
\begin{align} 
% \includegraphicsd{.12}{Listones/Feo_RibbonsLoop} -
%  \includegraphicsd{.115}{Listones/Feo_RibbonsABBA}\,\,\,& \text{\Large $\sim$}\,\,  
\includegraphicswextra{.1}{Listones/Feo_RibbonsA4_AABB90}{412} \,\,\, +\,\,\,
\includegraphicswextra{.1}{Listones/Feo_RibbonsB4_AABB90}{412}\, \,\,+\,\, 
\includegraphicsd{.182}{Listones/Feo_RibbonsABAB_ABAB90} 
\,\,+ \,\,
\includegraphicsd{.182}{Listones/Feo_RibbonsAABB_AABB90}\,
\end{align}\label{wherebsquaresits}%
\end{subequations}%
It is also clear that the cyclic external leg structure of each of
these terms is $ABBA$.  Since it is easy to confuse the order of the
letters, we stress that this is already an extended version of the
$ABAB$-model to exemplify the one-loop structure of another coupling
constant. But expressions \eqref{wherebsquaresits}
also give the (by Claim \ref{thm:dos} unique) $\beta$-function where
the $b^2$ actually has to sit. \\

Adding the $ABBA$ operator modifies the flow (for, now, $\beta_b - b (1+2\eta ) \sim b c$) but in order to get the desired fixed points, higher-degree operators might still be required.
As pointed out in the paragraph before \cite[Sec. 3.3]{ABAB} when
addressing higher-degree operators, 
$\Tr[(AB)^3]$ is indeed forbidden. However, there are degree-six operators 
that do preserve the symmetries \eqref{symmetries}, concretely $\Tr (ABABAA)$ or $\Tr (ABABBB)$, and 
contribute to the $\beta_b$-function (see \cite[Thm. 7.2]{FRGEmultimatrix}, where
the RG-flow has been computed adding these operators), thus enriching the truncation.

\begin{remark} Some closing points one could learn from \cite{ABAB}:\label{poprawie_jesli}
\begin{itemize}

 \itemb Notice that if we could somehow make $ABAB$ indistinguishable from $ABBA$, then the $\beta_b$-function \cite[Eq. 3.41]{ABAB} would be, in that case, correct; see \eqref{obvious}.
 This happens for a sub-ensemble of pairs of Hermitian matrices $A,B$ such that $AB$ is Hermitian (for then, $A$ and $B$ commute). 
 \itemb  Notice that the graph 
 \begin{align}\label{withIsing}
\text{ $\,\includegraphicsd{.23}{Listones/Feo_c1111squared}\,$}%\text{\qquad (not possible)}.
\end{align}
would yield the $b^2$ term needed in \cite[\text{Eq. 3.41}]{ABAB}. However, this graph is not 
possible, since the Ising operator $ \sigma \Tr(AB)$, depicted with  
the bicolored bead and responsible for `changing color', is not in the truncation behind that equation; moreover, 
if added, it violates two symmetries in \eqref{symmetries}, on top of $b^2$ being screened by $\sigma^2$ (that graph is a one-loop containing four operators: two Ising, two $ABAB$, alternated).  
Nevertheless, it seems plausible that the contamination of the running operators (i.e. considering  operators that are not unitary invariant) might effectively lead to
a color-change, as in \eqref{withIsing}. 
\end{itemize}
 
\end{remark}

\section{Conclusion}
We showed that the connection that \cite{ABAB} 
established between the Functional Renormalization Group (FRG) and a phase
diagram---identified there with \cite[Fig. 4]{KazakovABAB}---relies
on a $\beta$-function that does not have the one-loop structure of the 
Functional Renormalization Equation. In
Section \ref{sec:proposal} above, we proposed to extend the
minimalist truncation of \cite{ABAB} in order to find a FRG-compatible set
of $\beta$-functions. Accomplishing this proposal would provide a
sound\footnote{This is not is not the same as `complete'. Analytic aspects
  should still be addressed, but these require more effort. 
  In order for this not to be in vain, it is a good idea to start from a correct combinatorics.} bridge, in the intention of \cite{ABAB}, between the FRG and Causal Dynamical Triangulations \cite{Ambjorn:2001br}
through the $ABAB$-model \cite{Ambjorn:ABAB,Ambjorn:ABAB2}.  Finally, we provided 
in Remark \ref{poprawie_jesli} the condition one would need to add in order for the 
$\beta_b$-function given by \cite[Eq. 3.41]{ABAB} to be correct.

  \section*{Acknowledgements}
%   I thank ... for useful remarks.
The author was supported by the TEAM programme of the Foundation
for Polish Science co-financed by the European Union under the
European Regional Development Fund (POIR.04.04.00-00-5C55/17-00). 
 
 \bibliographystyle{JHEP}

\providecommand{\href}[2]{#2}\begingroup\raggedright\endgroup

\end{document}